\documentclass[10pt]{article}

\usepackage[margin=1in]{geometry}
\usepackage{amsmath}
\usepackage{amssymb}
\usepackage{mathtools}
\usepackage{amsthm}
\usepackage{braket}
\usepackage{tensor}
\usepackage{graphicx}
\usepackage{float}
\usepackage{tikz}
\usepackage{threeparttable}
\usepackage{enumitem}
\usepackage{indentfirst}
\usepackage{appendix}
\usepackage{algorithm}
\usepackage{algpseudocode}
\usepackage{subcaption}
\usepackage{makecell}
\usepackage{multirow}
\usepackage[square, numbers]{natbib}
\usepackage{authblk}
\usepackage{hyperref}
\newenvironment{acknowledgments}{\section*{Acknowledgments}}{}

\theoremstyle{definition}
\newtheorem{definition}{Definition}
\theoremstyle{plain}

\newtheorem{theorem}{Theorem}

\newtheorem{lemma}{Lemma}

\newtheorem{problem}{Problem}
\newtheorem{example}{Example}

\title{Automated Construction and Verification of Unextendible Product Bases}

\author[1]{Zicheng Han\thanks{ \href{mailto:hzcqj2020@mail.ustc.edu.cn}{hzcqj2020@mail.ustc.edu.cn}}}
\author[2]{Wanchen Zhang\thanks{ \href{mailto:wanchenz@mail.ustc.edu.cn}{wanchenz@mail.ustc.edu.cn}}}
\author[3]{Fei Shi\thanks{\href{mailto:shif26@mail.sysu.edu.cn}{shif26@mail.sysu.edu.cn}}}
\author[1,2]{Xiande Zhang \thanks{Corresponding author: \href{mailto:drzhangx@ustc.edu.cn}{drzhangx@ustc.edu.cn}}}

\affil[1]{School of Mathematical Sciences, University of Science and Technology of China, Hefei 230026, China}
\affil[2]{Hefei National Laboratory, University of Science and Technology of China, Hefei 230088, China}
\affil[3]{Institute of Quantum Computing and Software, School
of Computer Science and Engineering, Sun Yat-sen University, Guangzhou 510006, China}
\date{} 
\begin{document}
\maketitle
\begin{abstract}

Unextendible product bases (UPBs) are important structures in quantum information theory, with applications to completely entangled subspaces, bound entanglement, and local indistinguishability. Since many properties and applications of UPBs are closely related to their cardinalities, one of the central problems in the study of UPBs is to determine whether UPBs of prescribed sizes exist in a given multipartite system. In this paper, we introduce a SAT-assisted framework based on decompositions of the \(N\)-dimensional hypercube.
We define \(O_N\)-tile decompositions and prove a tile-to-UPB theorem: every \(O_N\)-tile decomposition induces a UPB through a construction based on tile-wise Fourier product bases and a global stopper state. We then encode the search for such decompositions as a Boolean satisfiability (SAT) problem and use SAT solvers to generate explicit instances. In terms of verification, we also implement a UPB verification algorithm based on local orthogonality graphs and unsaturated subspaces. The algorithm can be used to determine whether an arbitrary finite set of product states forms a UPB. Using this framework, we obtain UPBs of several sizes in some tripartite and quadripartite systems, including sizes \(13,14,\ldots,23\) in \(\mathbb C^3\otimes\mathbb C^3\otimes\mathbb C^3\). Moreover, the small-dimensional instances obtained here can serve as seed UPBs for recursive constructions, leading to further examples in larger multipartite systems.

\end{abstract}

\section{Introduction}
Entanglement is a central resource in quantum information theory, underlying fundamental protocols such as quantum teleportation and quantum key distribution~\cite{horodecki2009quantum,bennett1993teleporting,ekert1991quantum}. Determining whether a general multipartite quantum state is separable or entangled is computationally difficult~\cite{terhal2002detecting,gurvits2004classical,gharibian2010strong}. This difficulty motivates the systematic construction of quantum states and subspaces whose entanglement properties are known a priori, both for theoretical studies and for data-driven approaches to entanglement detection~\cite{deng2017quantum,lu2018separability,harney2020entanglement,bai2022unsupervised,zhang2023entanglement,li2024entanglement,huang2025direct}.

A systematic source of such certified subspaces is provided by unextendible product bases (UPBs)~\cite{bennett1999unextendible,divincenzo2003unextendible}. A UPB is a finite set of mutually orthogonal product states such that the orthogonal complement of its span contains no nonzero product state. Hence the orthogonal complement of a UPB is a completely entangled subspace (CES), in which every nonzero pure state is entangled~\cite{parthasarathy2004maximal,demianowicz2018unextendible,demianowicz2024completely}. This observation gives a standard method for constructing bound entangled states~\cite{bennett1999unextendible,horodecki1998mixed}, namely entangled mixed states from which no pure entanglement can be distilled by local operations and classical communication (LOCC). UPBs also play an important role in related questions such as quantum nonlocality without entanglement and Bell inequalities with no quantum violation~\cite{bennett1999quantum,augusiak2011bell,augusiak2012tight}.

Motivated by these applications, the construction and characterization of UPBs have been studied extensively for more than two decades~\cite{bennett1999unextendible,divincenzo2003unextendible,alon2001unextendible,feng2006unextendible,chen2015minimum,agrawal2019genuinely}. Since UPBs are constrained simultaneously by local tensor-product structure and global unextendibility, their construction in multipartite Hilbert spaces with large local dimensions is naturally combinatorial. Two approaches are particularly relevant to the present work: hypercube-decomposition methods~\cite{bennett1999quantum,shi2020unextendible,you2023unextendible,che2022constructing,he2024strong,shi2022strongly}, which provide a geometric route from finite product grids to explicit UPBs, and orthogonality-graph methods~\cite{alon2001unextendible,feng2006unextendible,chen2015minimum,johnston2014structure,shi2023graph}, which encode local orthogonality relations and support structural characterization and verification.

\textbf{Construction of UPBs via hypercube decomposition:} 
As a standard approach for UPB construction, one performs a specific decomposition on the $N$-dimensional hypercube ($N$-cube) $\mathcal{C} = \mathbb{Z}_{d_1} \times\cdots\times \mathbb{Z}_{d_N}$ ($\mathbb{Z}_n:= \{0, 1, \dots, n-1\}$), and subsequently maps this decomposition directly into a UPB in the Hilbert space $\mathcal{H} = \bigotimes_{i=1}^N \mathbb{C}^{d_i}$. This geometric approach was pioneered for bipartite systems ($N=2$) by Bennett \textit{et al.}~\cite{bennett1999quantum}, who constructed the seminal ``tiles'' UPB to demonstrate quantum nonlocality without entanglement. This intuition was subsequently formalized by Shi \textit{et al.}~\cite{shi2020unextendible} through the $U$-tile structure, establishing a rigorous correspondence between $2$-cube   decompositions and UPBs. The algorithmic side of this approach was further developed by You \textit{et al.}~\cite{you2023unextendible}, who introduced \(O\)-tiles for $2$-cube decompositions.

Although hypercube decomposition methods have been extended to multipartite systems with \(N\ge 3\), finding valid decompositions becomes increasingly difficult as the number of parties and local dimensions grow. Existing multipartite constructions still often rely on manually designed seed decompositions and recursive extension procedures~\cite{che2022constructing,zhen2024unextendible,shi2022strongly,shi2021strong,he2024strong,agrawal2019genuinely}. In particular, a common strategy is to start from decompositions of $\mathcal C=\mathbb Z_3\times\mathbb Z_3\times\mathbb Z_3$ and use them as seeds for systems with larger local dimensions~\cite{che2022constructing,he2024strong,shi2022strongly,shi2021strong,agrawal2019genuinely}. Thus, the systematic generation of new seed decompositions with diverse sizes and structures remains a natural computational challenge.

\textbf{Characterization and verification of UPBs via orthogonality graphs:} Pioneered by Alon and Lovász~\cite{alon2001unextendible}, the orthogonality graph representation was first instrumental in deriving universal lower bounds for the cardinality of UPBs. Building upon this framework, Feng~\cite{feng2006unextendible} utilized the 1-factorization of complete graphs to determine the minimum size of UPBs in certain Hilbert spaces. Subsequently, this graph-theoretic perspective was expanded by Chen and Johnston~\cite{chen2015minimum} to address the determination of minimum UPB cardinalities in bipartite case and some multipartite cases, verifying the tightness of the Alon-Lovász bound. Beyond cardinality constraints, regarding structural properties, Johnston~\cite{johnston2014structure} introduced computational search techniques based on the orthogonality graph representation to construct and classify specific classes of qubit UPBs. More recently, Shi \textit{et al.}~\cite{shi2023graph} characterized the size and properties of genuinely unextendible product bases (GUPBs) via the orthogonality graph representation of UPBs. This line of work provides the basis for verification methods based on local orthogonality graphs and unsaturated subspaces.

Indeed, prior studies have explored computer-aided methods for constructing UPBs~\cite{you2023unextendible,johnston2014structure,cheng2024construction}. However, these methods are typically restricted to bipartite systems or qubit cases, while multipartite systems with higher local dimensions still require substantial manual input. In this work, we develop a systematic SAT-assisted construction framework based on \(N\)-cube decompositions. The main contributions are as follows:

\begin{enumerate}
\item \textbf{An explicit $N$-dimensional tile-to-UPB theorem.} We formulate an $O_N$-tile condition on
$N$-cube \(\mathcal{C}=\mathbb Z_{d_1}\times\cdots\times\mathbb Z_{d_N}\), and prove an \(N\)-partite tile-to-UPB theorem that generalizes the bipartite tile-structure criterion~\cite{you2023unextendible}. That is, every \(O_N\)-tile decomposition induces a UPB through a construction based on tile-wise Fourier product bases and a global stopper state. This gives a structural bridge from hypercube decompositions to UPBs.

\item \textbf{SAT encoding of \(O_N\)-tile decompositions.} For the hypercube
\(\mathcal C=\mathbb Z_{d_1}\times\cdots\times\mathbb Z_{d_N}\), we encode the search for \(O_N\)-tile decompositions as a Boolean satisfiability problem. The encoding includes admissibility, covering, non-overlap, non-combinability, and cardinality constraints.

\item \textbf{A general verification algorithm.} We implement an exact verification algorithm based on local orthogonality graphs and unsaturated subspaces. The algorithm applies to arbitrary finite sets of product states and is used here to certify the symbolic product bases generated from the SAT-found \(O_N\)-tile decompositions. 

\item \textbf{Explicit UPB instances.} Using the SAT search, we construct UPBs of several sizes in selected tripartite and quadripartite systems. In particular, we obtain UPBs of size \(13,14,\ldots,23\) in \(\mathbb C^3\otimes\mathbb C^3\otimes\mathbb C^3\).  Previously, only sizes $7$ and $19$ are known~\cite{divincenzo2003unextendible,alon2001unextendible,agrawal2019genuinely}.
\end{enumerate}

Collectively, these ingredients provide an automated pipeline from combinatorial hypercube decompositions to explicit UPBs, together with independently checkable computational certificates. The verification routine is also benchmarked against the existing \textit{IsUPB} function in QETLAB~\cite{qetlab} on representative instances. Finally, the results supply seed examples for recursive constructions and further studies of completely entangled subspaces and bound entangled states.

The remainder of this paper is organized as follows. Section~2 recalls UPBs and develops the hypercube-decomposition construction, with Theorem~\ref{thm:tile-to-UPB} as the main tile-to-UPB theorem. Section~3 gives the SAT formulation used to search for \(O_N\)-tile decompositions. Section~4 presents a general orthogonality-graph and local-subspace verification algorithm for UPBs, which is applied here to certify the generated symbolic product bases. Section~5 reports the constructed UPB instances and discusses their consequences and limitations.

\section{Unextendible Product Bases and the Tile-to-UPB Construction}
In this section, we introduce the preliminaries and basic facts. For simplicity, 
denote $\mathbb{Z}_n := \{0, 1, \dots, n-1\}$.

\subsection{Unextendible product bases}
 We consider an $N$-partite Hilbert space $\mathcal{H} = \bigotimes_{i=1}^N \mathbb{C}^{d_i}$. For convenience, we assume the local dimensions are ordered such that $d_1 \leq d_2 \leq \cdots \leq d_N$. Unless otherwise stated, the states and operators discussed are unnormalized.

Any $N$-partite pure state in this space can be expressed as:

$$
\ket{\Psi} = \sum_{i_1,i_2,\ldots,i_N} c_{i_1,i_2,\ldots,i_N} \ket{i_1} \ket{i_2}\cdots\ket{i_N},\footnotemark
$$
\footnotetext{For brevity, the tensor product symbol $\otimes$ is omitted here and throughout the paper.}
where $\{ \ket{i_j} \}_{i_j=0}^{d_j-1}$ represents the computational basis of the $j$-th subsystem $\mathbb{C}^{d_j}$.

We classify sets of orthogonal product states based on the subspace they span. A complete orthogonal product basis (COPB) is a set of orthogonal product states that spans the entire space $\mathcal{H}$. An orthogonal product set is called a UPB if its orthogonal complement contains no nonzero product state. It is nontrivial if its cardinality is strictly smaller than $\dim \mathcal H$.
The formal definition is as follows:

\begin{definition}
Let $\mathcal{U} = \{ \ket{\varphi^{(i)}} \}_{i=1}^k \subset \mathcal{H} = \bigotimes_{i=1}^N \mathbb{C}^{d_i}$ be a set of orthogonal product states, where each state is of the form $\ket{\varphi^{(i)}} = \ket{\varphi_1^{(i)}}  \ket{\varphi_2^{(i)}}  \cdots \ket{\varphi_N^{(i)}}$. The set $\mathcal{U}$ is called a \textbf{UPB} if the orthogonal complement of the subspace spanned by $\mathcal{U}$ contains no nonzero product state.
\end{definition}

Note that a UPB of size $\prod_{i=1}^N d_i$ is trivial, since a COPB always meets the criterion. Consequently, we restrict our attention to nontrivial UPBs throughout this work.
Example~\ref{shift_upb} illustrates the canonical ``Shifts'' UPB~\footnote{It is named the ``Shifts'' UPB because the first three orthogonal product states are constructed by cyclically shifting a basic set of local single-qubit states across different subsystems.} in a three-qubit system $\mathbb{C}^2 \otimes \mathbb{C}^2 \otimes \mathbb{C}^2$, which was originally proposed by Bennett \textit{et al.}~\cite{bennett1999unextendible}.
\begin{example}[The ``Shifts'' UPB\cite{bennett1999unextendible}]\label{shift_upb}
The ``Shifts'' UPB consists of the following four  product states:
\begin{equation}
\begin{aligned}
\ket{\varphi^{(1)}} &= \ket{0}\ket{1}(\ket{0}-\ket{1}), \\
\ket{\varphi^{(2)}} &= \ket{1}(\ket{0}-\ket{1})\ket{0}, \\
\ket{\varphi^{(3)}} &= (\ket{0}-\ket{1})\ket{0}\ket{1}, \\
\ket{S} &= (\ket{0}+\ket{1})(\ket{0}+\ket{1})(\ket{0}+\ket{1}).
\end{aligned}
\end{equation}
It is easy to verify that these states are mutually orthogonal and that there is no nonzero product state in the orthogonal complement of the subspace spanned by $\{ \ket{\varphi^{(i)}} \}_{i=1}^3\cup \{\ket{S}\}$.
\end{example}

A fundamental and longstanding question regarding UPBs is:

\begin{problem}\label{Pro_size}
    For which sizes do UPBs exist in the $N$-partite Hilbert space $\mathcal{H} = \bigotimes_{i=1}^N \mathbb{C}^{d_i}$?
\end{problem} 
Extensive research has investigated upper and lower bounds for UPB sizes~\cite{divincenzo2003unextendible,alon2001unextendible,feng2006unextendible,chen2015minimum,johnston2014structure,johnston2013minimum}. In particular,  Alon and Lov\'asz ~\cite{alon2001unextendible} gave the universal lower bound
\[
    |\mathcal{U}| \ge 1+\sum_{i=1}^N(d_i-1)
\]
for every UPB \(\mathcal U\) in \(\bigotimes_{i=1}^N\mathbb C^{d_i}\). They also showed that this bound is not tight if and only if either \(N=2\) and \(d_1=2\), or \(1+\sum_{i=1}^N(d_i-1)\) is odd and at least one \(d_i\) is even.

\subsection{\(O_N\)-tile decompositions}
\label{subsec:tile-to-upb}

Hypercube decomposition methods have been used in constructions of
UPBs~\cite{bennett1999unextendible,shi2020unextendible,shi2022strongly,he2024strong,zhen2024unextendible,che2022constructing,agrawal2019genuinely}. In particular, You \textit{et al.}~\cite{you2023unextendible} introduced the structure of \(O\)-tiles for the $2$-cube decomposition problem. We now formulate the corresponding \(N\)-dimensional version and prove that it induces UPBs by a construction based on tile-wise Fourier product bases and a global stopper state.

Let
    $\mathcal C=\mathbb Z_{d_1}\times\cdots\times\mathbb Z_{d_N}.$
A \textit{tile} of $\mathcal C$ is a Cartesian product
\[
    t=R_1\times\cdots\times R_N,
    \qquad
    \emptyset\neq R_i\subseteq \mathbb Z_{d_i}.
\]
The sets \(R_i\) do not need to consist of consecutive elements in $\mathbb Z_{d_i}$. The size of the tile is
\(|t|=\prod_{i=1}^N |R_i|\). 

\begin{definition}[\(O_N\)-tile decomposition]
\label{def:ON-tile} Let
 $\mathcal C=\bigsqcup_{j=1}^s t_j$ be a disjoint union of tiles $t_j$, $j=1,2,\ldots,s$ with \(s\ge 3\). It  is called an \(O_N\)-tile decomposition if for every subset
\(J\subseteq\{1,2,\ldots,s\}\) with \(1<|J|<s\), the union
\[
    \bigsqcup_{j\in J}t_j
\]
does not form a tile. \end{definition}

For each tile \(t_j = R_1^{(j)} \times \cdots \times R_N^{(j)}\) in Definition~\ref{def:ON-tile},  we define the subspace supported on the tile $t_j$ as \[ \mathcal H_j := \operatorname{span} \bigl\{ \ket{r_1} \ket{r_2} \cdots \ket{r_N} : (r_1,\ldots,r_N) \in t_j \bigr\}. \] Equivalently, \[ \mathcal H_j = \bigotimes_{i=1}^N \operatorname{span} \{\ket{r} : r \in R_i^{(j)}\}. \] Since the tiles $t_1,\ldots,t_s$ form a partition of $\mathcal C$, the computational basis vectors are partitioned accordingly. Hence the subspaces $\mathcal H_1,\ldots,\mathcal H_s$ are mutually orthogonal and give an orthogonal decomposition $ \mathcal H = \bigoplus_{j=1}^s \mathcal H_j .$ 

Now we describe the product-state construction associated with an \(O_N\)-tile decomposition. 
We first give an example.
\begin{example}[The \(O_3\)-tile decomposition underlying the ``Shifts'' UPB]
\label{ex:shifts-tile}
Consider the three-qubit case
   $ \mathcal C=\mathbb Z_2\times\mathbb Z_2\times\mathbb Z_2.$
The following five tiles form an \(O_3\)-tile decomposition of $\mathcal C$.

\medskip

\noindent
\begin{minipage}[c]{0.55\textwidth}
\centering
\begin{tikzpicture}[
    scale=0.78,
    line join=round,
    line cap=round,
    every node/.style={font=\large}
]

\newcommand{\pt}[3]{({1.75*(#1)+0.90*(#2)},{0.52*(#2)+1.20*(#3)})}

\definecolor{tOne}{RGB}{225,225,225}
\definecolor{tTwo}{RGB}{245,215,85}
\definecolor{tThree}{RGB}{180,215,170}
\definecolor{tFour}{RGB}{170,195,225}
\definecolor{tFive}{RGB}{235,190,190}

\path[draw=black, thick, fill=tFour]
    \pt{2}{0}{0} -- \pt{2}{1}{0} -- \pt{2}{1}{1} -- \pt{2}{0}{1} -- cycle;
\node at \pt{2}{0.5}{0.5} {$t_4$};

\path[draw=black, thick, fill=tOne]
    \pt{2}{1}{0} -- \pt{2}{2}{0} -- \pt{2}{2}{1} -- \pt{2}{1}{1} -- cycle;
\node at \pt{2}{1.55}{0.5} {};

\path[draw=black, thick, fill=tThree]
    \pt{2}{0}{1} -- \pt{2}{1}{1} -- \pt{2}{1}{2} -- \pt{2}{0}{2} -- cycle;
\node at \pt{2}{0.5}{1.5} {};

\path[draw=black, thick, fill=tOne]
    \pt{2}{1}{1} -- \pt{2}{2}{1} -- \pt{2}{2}{2} -- \pt{2}{1}{2} -- cycle;
\node at \pt{2}{1.55}{1.5} {$t_1$};

\path[draw=black, thick, fill=tThree]
    \pt{0}{0}{2} -- \pt{1}{0}{2} -- \pt{1}{1}{2} -- \pt{0}{1}{2} -- cycle;
\node at \pt{0.5}{0.5}{2} {};

\path[draw=black, thick, fill=tThree]
    \pt{1}{0}{2} -- \pt{2}{0}{2} -- \pt{2}{1}{2} -- \pt{1}{1}{2} -- cycle;
\node at \pt{1.5}{0.5}{2} {};

\path[draw=black, thick, fill=tFive]
    \pt{0}{1}{2} -- \pt{1}{1}{2} -- \pt{1}{2}{2} -- \pt{0}{2}{2} -- cycle;
\node at \pt{0.5}{1.5}{2} {$t_5$};

\path[draw=black, thick, fill=tOne]
    \pt{1}{1}{2} -- \pt{2}{1}{2} -- \pt{2}{2}{2} -- \pt{1}{2}{2} -- cycle;
\node at \pt{1.5}{1.5}{2} {};

\path[draw=black, thick, fill=tTwo]
    \pt{0}{0}{0} -- \pt{1}{0}{0} -- \pt{1}{0}{1} -- \pt{0}{0}{1} -- cycle;
\node at \pt{0.5}{0}{0.5} {$t_2$};

\path[draw=black, thick, fill=tFour]
    \pt{1}{0}{0} -- \pt{2}{0}{0} -- \pt{2}{0}{1} -- \pt{1}{0}{1} -- cycle;
\node at \pt{1.5}{0}{0.5} {};

\path[draw=black, thick, fill=tThree]
    \pt{0}{0}{1} -- \pt{1}{0}{1} -- \pt{1}{0}{2} -- \pt{0}{0}{2} -- cycle;
\node at \pt{0.5}{0}{1.5} {};

\path[draw=black, thick, fill=tThree]
    \pt{1}{0}{1} -- \pt{2}{0}{1} -- \pt{2}{0}{2} -- \pt{1}{0}{2} -- cycle;
\node at \pt{1.5}{0}{1.5} {$t_3$};

\node[left] at \pt{-0.4}{0}{0.95} {$R_3$};
\node[left] at \pt{-0.1}{0}{0.5} {$0$};
\node[left] at \pt{-0.1}{0}{1.4} {$1$};

\node[below] at \pt{1.2}{-0.72}{0} {$R_1$};
\node[below] at \pt{0.7}{-0.3}{0} {$1$};
\node[below] at \pt{1.7}{-0.3}{0} {$0$};

\node[right] at \pt{3}{-0.5}{0} {$R_2$};
\node[below right] at \pt{1.9}{0.55}{0} {$0$};
\node[right] at \pt{2.2}{0.9}{0} {$1$};

\end{tikzpicture}
\end{minipage}
\hfill
\begin{minipage}[c]{0.41\textwidth}
\small
\[
\begin{aligned}
    t_1&=\{0\}\times\{1\}\times\{0,1\},\\
    t_2&=\{1\}\times\{0,1\}\times\{0\},\\
    t_3&=\{0,1\}\times\{0\}\times\{1\},\\
    t_4&=\{0\}\times\{0\}\times\{0\},\\
    t_5&=\{1\}\times\{1\}\times\{1\}.
\end{aligned}
\]
\end{minipage}

\medskip

We now illustrate the construction for this decomposition. Omitting normalization
constants, each tile $t_j$ gives a set of tile-wise Fourier product  basis $\mathcal B_j$ as follows. 
\[
\begin{aligned}
\mathcal B_1
&=\{\ket{\eta_1}=|0\rangle|1\rangle(|0\rangle+|1\rangle),\
\ket{\varphi^{(1)}}=|0\rangle|1\rangle(|0\rangle-|1\rangle)\},\\
\mathcal B_2
&=\{\ket{\eta_2}=|1\rangle(|0\rangle+|1\rangle)|0\rangle,\
\ket{\varphi^{(2)}}=|1\rangle(|0\rangle-|1\rangle)|0\rangle\},\\
\mathcal B_3
&=\{\ket{\eta_3}=(|0\rangle+|1\rangle)|0\rangle|1\rangle,\
\ket{\varphi^{(3)}}=(|0\rangle-|1\rangle)|0\rangle|1\rangle\},\\
\mathcal B_4
&=\{\ket{\eta_4}=|0\rangle|0\rangle|0\rangle\},\qquad
\mathcal B_5=\{\ket{\eta_5}=|1\rangle|1\rangle|1\rangle\}.
\end{aligned}
\]
Here, each \(\eta_j\) is the all-zero Fourier vector on the corresponding tile.
The tile-to-UPB construction removes these vectors and adds the global stopper
\[
    |S\rangle=(|0\rangle+|1\rangle)(|0\rangle+|1\rangle)(|0\rangle+|1\rangle).
\]
Thus
\[
\mathcal U
=
\bigcup_{j=1}^5 \bigl(\mathcal B_j\setminus\{\ket{\eta_j}\}\bigr)\cup\{|S\rangle\}
=
\{\ket{\varphi^{(1)}},\ket{\varphi^{(2)}},\ket{\varphi^{(3)}},\ket{S}\},
\]
which is exactly the ``Shifts'' UPB in Example~\ref{shift_upb}.

\end{example}

In general, suppose that there is an \(O_N\)-tile decomposition of $\mathcal C=\mathbb Z_{d_1}\times\cdots\times\mathbb Z_{d_N}$
\[
    \mathcal C=\bigsqcup_{j=1}^s t_j,
    \qquad
    t_j=R_1^{(j)}\times\cdots\times R_N^{(j)}.
\]
Write \(p_i^{(j)}=|R_i^{(j)}|\), and fix an order
\[
    R_i^{(j)}=\{r_{i,0}^{(j)},r_{i,1}^{(j)},\ldots,r_{i,p_i^{(j)}-1}^{(j)}\}.
\]
For \(a\in\mathbb Z_{p_i^{(j)}}\), define the local Fourier state
\begin{equation}
\ket{u_{i,a}^{(j)}}
=
\sum_{\ell=0}^{p_i^{(j)}-1}
\omega_{p_i^{(j)}}^{a\ell}\ket{r_{i,\ell}^{(j)}},
\qquad
\omega_{p_i^{(j)}}=e^{2\pi i/p_i^{(j)}}.
\end{equation}
Then \(\{\ket{u_{i,a}^{(j)}}:a\in\mathbb Z_{p_i^{(j)}}\}\) is an orthogonal basis of \(\operatorname{span}\{\ket{r}:r\in R_i^{(j)}\}\).  Note that the local-zero Fourier state 
$\ket{u_{i,0}^{(j)}}
=\sum_{r\in R_i^{(j)}}\ket{r}.
$

For each tile \(t_j\), let
\begin{equation}
\mathcal B_j=
\left\{
\bigotimes_{i=1}^N \ket{u_{i,a_i}^{(j)}}:
 a_i\in\mathbb Z_{p_i^{(j)}}
\right\}.
\end{equation}
This is an orthogonal product basis of the subspace supported on \(t_j\). Denote the all-zero Fourier state by
\begin{equation}\label{eq:eta}
\ket{\eta_j}:=\bigotimes_{i=1}^N\ket{u_{i,0}^{(j)}}=\bigotimes_{i=1}^N  \left(\sum_{r\in R_i^{(j)}}\ket{r}\right). \end{equation}
Finally, define the  stopper state
\begin{equation}
\ket S=
\bigotimes_{i=1}^N
\left(
\sum_{r\in\mathbb Z_{d_i}}\ket{r}
\right).
\end{equation}
A direct computation gives
\begin{equation}
\left\langle S\middle|
\bigotimes_{i=1}^N \ket{u_{i,a_i}^{(j)}}
\right\rangle\neq0
\quad\Longleftrightarrow\quad
a_1=\cdots=a_N=0.
\end{equation}
Thus \(\ket S\) is orthogonal to every state in \(\mathcal B_j\) except \(\ket{\eta_j}\). We define
\begin{equation}
\mathcal U=
\left(
\bigcup_{j=1}^s(\mathcal B_j\setminus\{\ket{\eta_j}\})
\right)
\cup\{\ket S\}.
\label{eq:tile-upb-set}
\end{equation}

\subsection{The tile-to-UPB theorem}
We now show that the construction associated with an
\(O_N\)-tile decomposition does not merely give an orthogonal product set, but
is in fact unextendible. The key observation is that any additional product
state orthogonal to the constructed set would induce a nontrivial union of
tiles that is itself a tile, which is forbidden by the
definition of an \(O_N\)-tile decomposition.

\begin{theorem}[Tile-to-UPB theorem]
\label{thm:tile-to-UPB}
Let \(\mathcal C=\mathbb Z_{d_1}\times\cdots\times\mathbb Z_{d_N}\), and let
    $\mathcal C=\bigsqcup_{j=1}^s t_j$
be an \(O_N\)-tile decomposition. Then the set \(\mathcal U\) in \eqref{eq:tile-upb-set} is a UPB in \(  \mathbb C^{d_1}\otimes\cdots\otimes\mathbb C^{d_N}.\)
Moreover,
\begin{equation}
    |\mathcal U|=\prod_{i=1}^N d_i-s+1.
    \label{eq:upb_size}
\end{equation}
\end{theorem}

\begin{proof}
First, \(\mathcal U\) is a set of product states. The states within each
\(\mathcal B_j\) form an orthogonal product basis
of the tile-supported subspace \(\mathcal H_j\).
If \(t_j\) and \(t_{j'}\)
are distinct tiles, then they are disjoint Cartesian products; hence for some
coordinate \(i\), the local supports \(R_i^{(j)}\) and \(R_i^{(j')}\) are
disjoint, and the corresponding product states are orthogonal. Finally, the
stopper \(\ket S\) is orthogonal to every nonconstant Fourier state in each
block. Therefore \(\mathcal U\) is a set of mutually orthogonal product states.

It remains to prove unextendibility. Suppose, for contradiction, that there is
a nonzero product state
\[
    \ket{x}=\ket{x_1}\ket{x_2}\cdots\ket{x_N}
\]
orthogonal to every state in \(\mathcal U\).

Let \(\Pi_j\) denote the orthogonal projection onto the tile-supported subspace
\(\mathcal H_j\).
If we write\[
    \ket{x}=\sum_{(r_1,r_2,\ldots,r_N)\in\mathcal C}  c(r_1,r_2,\cdots, r_N) \ket{r_1}\ket{r_2}\cdots\ket{r_N},
\] then \[\Pi_j\ket{x}=\sum_{(r_1,r_2,\ldots,r_N)\in t_j}  c(r_1,r_2,\cdots, r_N) \ket{r_1}\ket{r_2}\cdots\ket{r_N}.\]
For a fixed tile \(t_j\), the
orthogonality of \(\ket{x}\) to all states in
\(\mathcal B_j\setminus\{\ket{\eta_j}\}\) implies that
\(\Pi_j\ket{x}\) is either zero or a scalar multiple of \(\ket{\eta_j}\).
Let
\[
    J=\{j\in[s]:\Pi_j\ket{x}=\alpha_j\ket{\eta_j} \text{ for some }\alpha_j\neq0\}. 
\]
Since \(\ket{\eta_j}\) has nonzero coefficient on every computational basis vector
supported on \(t_j\) and $\Pi_j\ket{x}=0$ for $j\notin J$, we have
\[
c(r_1,\ldots,r_N)\neq 0
\quad\Longleftrightarrow\quad
(r_1,\ldots,r_N)\in \bigcup_{j\in J} t_j .
\]
Hence, defining
\[
\operatorname{supp}(x)
:=
\bigl\{
(r_1,\ldots,r_N)\in\mathcal C
:
c(r_1,\ldots,r_N)\neq 0
\bigr\},
\]
we obtain
\[
\operatorname{supp}(x)
=
\bigcup_{j\in J} t_j .
\]

On the other hand, if we write 
\[
    \ket{x_i}=\sum_{r\in\mathbb Z_{d_i}}c_i(r)\ket r,
\] and define \[
    \operatorname{supp}(x_i)
    =
    \{r\in\mathbb Z_{d_i}:c_i(r)\neq0\},
\] then \[ c(r_1,r_2,\cdots, r_N) = c_1(r_1)c_2(r_2)\cdots c_N(r_N)\] and thus \[
   \operatorname{supp}(x)=
   \operatorname{supp}(x_1)\times\cdots\times \operatorname{supp}(x_N).
\] That is, $\bigcup_{j\in J}t_j$ forms a  Cartesian product.
Since \(\ket{x}\neq0\) and
    $\mathcal H=\bigoplus_{j=1}^s\mathcal H_j,$
the set \(J\) is nonempty. Further,  $\mathcal C=\bigsqcup_{j=1}^s t_j$
is an \(O_N\)-tile decomposition, so $|J|=1$ or $s$.

If  $J=\{j\}$ for some $j$, then $\ket{x}=\Pi_j\ket{x}=\alpha_j\ket{\eta_j}$ for some $\alpha_j\neq 0$. 
But \(\langle S|\eta_j\rangle\neq0\), so
\(\langle S|x\rangle\neq0\), contradicting the assumption that \(\ket{x}\) is
orthogonal to \(\ket S\).

If \(J=[s]\), then $ \operatorname{supp}(x)=\mathcal C$, that is, $c(r_1,r_2,\cdots, r_N)\neq 0$ for any $(r_1,r_2,\cdots, r_N)\in \mathcal C$.
Since 
for each tile \(t_j\),
    $\Pi_j\ket{x}=\alpha_j\ket{\eta_j}$
 and by the form of   \(\ket{\eta_j}\) in \eqref{eq:eta},
$c(r_1,r_2,\cdots, r_N) = c_1(r_1)c_2(r_2)\cdots c_N(r_N)$ is constant on \(t_j\).
Recall $t_j=R_1^{(j)}\times\cdots\times R_N^{(j)}$. We claim that each local coefficient \(c_i\) is constant on each \(R_i^{(j)}\). Indeed, fix a tile
\(t_j\), a coordinate \(i\), and two elements \(r,r'\in R_i^{(j)}\). Fix an
arbitrary \(r_\ell\in R_\ell^{(j)}\) for all \(\ell\neq i\). Since the above
product is constant on \(t_j\), we have
\[
    c_i(r)\prod_{\ell\neq i}c_\ell(r_\ell)
    =
    c_i(r')\prod_{\ell\neq i}c_\ell(r_\ell).
\]
All factors \(c_\ell(r_\ell)\) are nonzero because \(\ket{x}\) has full
support. Hence \(c_i(r)=c_i(r')\). Thus \(c_i\) is constant on \(R_i^{(j)}\).

Next, we show that for each $i$,  \(c_i\) is constant on \(\mathbb Z_{d_i}\).
Suppose that some local coefficient function \(c_i\) is not constant on
\(\mathbb Z_{d_i}\). Then there is a value \(\lambda\in\mathbb C\) such that
\[
    L_i=\{r\in\mathbb Z_{d_i}: c_i(r)=\lambda\}
\]
is nonempty and proper. Consider the Cartesian product
\[
    \Omega_i(L_i)
    =
    \mathbb Z_{d_1}\times\cdots\times L_i\times\cdots\times\mathbb Z_{d_N}.
\]
Since \(c_i(r)\) is constant on each \(R_i^{(j)}\), every \(R_i^{(j)}\) is either
contained in \(L_i\) or disjoint from \(L_i\). Hence every tile is either
entirely contained in \(\Omega_i(L_i)\) or disjoint from it. Therefore
\(\Omega_i(L_i)\) is a nonempty proper union of tiles.
Let
\[
    I=\{j\in[s]:t_j\subseteq \Omega_i(L_i)\}.
\]
Then \(I\) is nonempty and proper. By Definition~\ref{def:ON-tile}, $|I|=1$,
that is, \(\Omega_i(L_i)\) itself is one tile. However, its complement
\[
    \mathcal C\setminus \Omega_i(L_i)
    =
    \mathbb Z_{d_1}\times\cdots\times
    (\mathbb Z_{d_i}\setminus L_i)
    \times\cdots\times\mathbb Z_{d_N}
\]
is also a Cartesian product and is the union of the remaining \(s-1\) tiles.
Since \(s\ge3\), we have \(1<s-1<s\), again contradicting
Definition~\ref{def:ON-tile}.

Therefore every \(c_i\) must be constant on \(\mathbb Z_{d_i}\).
Thus \(\ket{x}\) is proportional to the stopper \(\ket S\). This is impossible
because \(\ket{x}\) is assumed to be orthogonal to \(\ket S\). Therefore no
nonzero product state is orthogonal to all states in \(\mathcal U\), and
\(\mathcal U\) is a UPB.

Finally,
\[
    |\mathcal U|=
    \sum_{j=1}^s(|\mathcal B_j|-1)+1
    =\sum_{j=1}^s(|t_j|-1)+1
    =\prod_{i=1}^N d_i-s+1,
\]
because the tiles partition \(\mathcal C\). This proves the formula
\eqref{eq:upb_size}.
\end{proof}
Theorem~\ref{thm:tile-to-UPB} provides a construction of UPBs in \(N\)-partite systems. Motivated by Problem~\ref{Pro_size}, we propose the following problem.

\begin{problem}
\label{Pro_tile_size}
Given \(  \mathcal C=\mathbb Z_{d_1}\times\cdots\times\mathbb Z_{d_N},\)
determine the integers \(s\) for which \(\mathcal C\) admits an
\(O_N\)-tile decomposition with exactly \(s\) tiles.
\end{problem}

In the next section, we encode the \(O_N\)-tile condition as a SAT instance and solve it with a SAT solver, thereby providing a computational method for Problem~\ref{Pro_tile_size}.

\section{SAT-assisted Search for \(O_N\)-tile Decompositions}\label{sec:sat-search}

By Theorem~\ref{thm:tile-to-UPB}, within our framework, constructing UPBs is reduced to finding \(O_N\)-tile decompositions. This section formulates that search as a SAT problem. We first recall the CNF notation used by SAT solvers and then encode the admissibility, covering, non-overlap, non-combinability, and cardinality constraints.

\subsection{SAT formula in CNF}\label{subsec:sat-cnf}
SAT is a prototypical NP-complete problem and provides a flexible framework
for encoding finite combinatorial search problems. Once an exact encoding has
been established, solving the original search problem amounts to determining
whether the corresponding Boolean formula is satisfiable. Moreover, modern SAT
solvers are often effective on structured instances arising from practical
applications.

Although Boolean formulas can be represented in various forms, modern SAT
solvers typically accept formulas in conjunctive normal form (CNF). The regular
clause structure of CNF is well suited to the propagation techniques and
heuristic optimizations employed by state-of-the-art SAT solvers.

\begin{definition}
A Boolean formula $\mathcal{F}$ is said to be in \textbf{Conjunctive Normal Form } (CNF) if it is constructed as a conjunction of clauses, where each clause consists of a disjunction of literals. Let $X = \{x_1, x_2, \ldots, x_n\}$ denote the set of Boolean variables. A literal $l$ represents either a variable $x_i$ or its negation $\neg x_i$. The formula is structured as follows:$$\mathcal{F} = C_1 \land C_2 \land \cdots \land C_m,$$where each clause $C_i = (l_{i,1} \lor l_{i,2} \lor \cdots \lor l_{i,k_i})$ is a disjunction of literals. The SAT problem seeks to identify a truth assignment $\sigma: X \to \{0, 1\}$ such that the entire formula evaluates to true, \textit{i.e.}, $\mathcal{F}(\sigma) = 1$.
\end{definition}

This CNF formalism is used below to encode the search for \(O_N\)-tile decompositions. By Theorem~\ref{thm:tile-to-UPB}, every satisfying assignment found by the search induces a UPB.

\subsection{Boolean encoding of \(O_N\)-tile decompositions}
\subsubsection{Modeling the \(O_N\)-tile constraints}

We consider decompositions of the \(N\)-cube
    $\mathcal C=\mathbb Z_{d_1}\times\cdots\times\mathbb Z_{d_N}.$
For convenience, we say a tile  \(t=R_1\times\cdots\times R_N\)  of $\mathcal C$  \emph{admissible} if at least two of the sets \(R_i\) are proper subsets of \(\mathbb Z_{d_i}\). Equivalently,
\[
    \left|\{i:R_i\subsetneq \mathbb Z_{d_i}\}\right|\ge 2. \] By Definition~\ref{def:ON-tile}, each tile in an  \(O_N\)-tile decomposition must be admissible (for if a tile is not admissible, its complement is a Cartesian product formed by the union of the remaining $s-1$ tiles, contradicting Definition~\ref{def:ON-tile} since $s\ge 3$).
The Boolean variables range over the admissible tiles. Thus we define
\begin{equation}
\Omega=
\left\{
R_1\times\cdots\times R_N:
\emptyset\neq R_i\subseteq\mathbb Z_{d_i},\quad
\left|\{i:R_i\subsetneq\mathbb Z_{d_i}\}\right|\ge2
\right\}.
\label{eq:admissible-tile-set}
\end{equation}

For a nontrivial Cartesian product
\(T=Q_1\times\cdots\times Q_N\) with \(1<|T|<|\mathcal C|\), and for a point
\(P=(x_1,\ldots,x_N)\in T\), define
\begin{equation}
\Omega_{T,P}
=
\left\{
    t\in\Omega:
    t\subsetneq T\text{ and }P\in t
\right\}.
\label{eq:omega-TP}
\end{equation}
Thus \(\Omega_{T,P}\) consists of the admissible proper subtiles of \(T\) that cover \(P\).

For every \(t\in\Omega\), introduce a Boolean variable \(A_t\), where \(A_t=1\) means that \(t\) is selected in the decomposition. The SAT encoding has the following three groups of constraints.

\begin{enumerate}
\item \textbf{Covering.} Every point of \(\mathcal C\) is covered by at least one selected tile:
\begin{equation}
    \bigwedge_{P\in\mathcal C}
    \bigvee_{\substack{t\in\Omega\\ P\in t}} A_t .
    \label{covering}
\end{equation}

\item \textbf{Non-overlap.} Two distinct selected tiles cannot intersect:
\begin{equation}
    \bigwedge_{\substack{\{t,t'\}\subseteq\Omega\\ t\neq t',\ t\cap t'\neq\emptyset}}
    (\neg A_t\vee\neg A_{t'}).
    \label{non-overlapp}
\end{equation}
Equivalently, one may impose this clause only for an arbitrary fixed ordering \(t<t'\) of \(\Omega\).

\item \textbf{Non-combinability.} No nontrivial Cartesian product \(T\) with \(1<|T|<|\mathcal C|\) may be fully covered by selected admissible proper subtiles of \(T\):
\begin{equation}
    \bigwedge_{\substack{T\text{ is a Cartesian product}\\\text{with}~1<|T|<|\mathcal C|}}
    \neg\left(
        \bigwedge_{P\in T}
        \left(
            \bigvee_{t\in\Omega_{T,P}} A_t
        \right)
    \right).
    \label{Oktile}
\end{equation}
\end{enumerate}

Together, the covering and non-overlap clauses force the selected admissible tiles to form a partition of \(\mathcal C\). The non-combinability clauses enforce the condition in Definition~\ref{def:ON-tile}. Thus the SAT formula encodes exactly the search for \(O_N\)-tile decompositions.

\subsubsection{CNF conversion of the non-combinability constraint}

The covering and non-overlap constraints in \eqref{covering} and
\eqref{non-overlapp} are already in CNF. However, the non-combinability constraint
\eqref{Oktile} contains the nested form
\(\neg(\bigwedge(\bigvee\cdots))\). We convert it to CNF by a Tseitin
encoding.

Fix a nontrivial Cartesian product \(T\) with \(1<|T|<|\mathcal C|\) and a point
\(P\in T\). Introduce an auxiliary variable \(B_{T,P}\), interpreted as
`` \(P\) is covered by at least one selected admissible proper subtile of
\(T\) ''. Formally,
\begin{equation}
    B_{T,P}\leftrightarrow \bigvee_{t\in\Omega_{T,P}} A_t.
    \label{eq:tseitin-B}
\end{equation}
This equivalence is encoded by the CNF clauses
\begin{equation}
    \bigwedge_{t\in\Omega_{T,P}}(\neg A_t\vee B_{T,P})
    \label{eq:tseitin-forward}
\end{equation}
and
\begin{equation}
    \neg B_{T,P}\vee\left(\bigvee_{t\in\Omega_{T,P}} A_t\right).
    \label{eq:tseitin-backward}
\end{equation}
If \(\Omega_{T,P}=\emptyset\), then \eqref{eq:tseitin-backward} reduces to
\(\neg B_{T,P}\), as expected.

For this fixed \(T\), forbidding a full cover by selected proper subtiles is
now expressed as
\begin{equation}
    \neg\left(\bigwedge_{P\in T}B_{T,P}\right),
\end{equation}
which is the single CNF clause
\begin{equation}
    \bigvee_{P\in T}\neg B_{T,P}.
    \label{eq:noncomb-cnf-clause}
\end{equation}
Therefore, the CNF version of \eqref{Oktile} is obtained by imposing \eqref{eq:tseitin-forward} and \eqref{eq:tseitin-backward} for every nontrivial Cartesian product \(T\) and every \(P\in T\), together with \eqref{eq:noncomb-cnf-clause} once for each such \(T\).

\subsubsection{Cardinality constraint and encoding size}

To search for \(O_N\)-tile decompositions with a prescribed number \(s\) of
tiles, we impose the exact-cardinality constraint
\begin{equation}
    \sum_{t\in\Omega}A_t=s.
    \label{eq:exact-cardinality}
\end{equation}
In the implementation, this is encoded as
\begin{equation}
    \sum_{t\in\Omega}A_t\le s
    \label{eq:at-most-s}
\end{equation}
and
\begin{equation}
    \sum_{t\in\Omega} (1 -A_t)\le |\Omega|-s,
    \label{eq:at-least-s}
\end{equation}
where \eqref{eq:at-least-s} is equivalent to
\(\sum_{t\in\Omega}A_t\ge s\). Each at-most constraint is translated into CNF
using a standard totalizer encoding.

We also record the size of the base SAT encoding. Let \( D=|\mathcal C|=\prod_{i=1}^N d_i\). The number of tile-selection variables is
\begin{equation}
    |\Omega|
    =
    \prod_{i=1}^N(2^{d_i}-1)
    -
    \sum_{i=1}^N(2^{d_i}-1)
    +
    (N-1),
    \label{eq:omega-size}
\end{equation}
where the subtracted terms correspond to tiles that are full in all but
possibly one coordinate.

The Tseitin variables \(B_{T,P}\) are indexed by pairs \((T,P)\), where \(T\)
is a nontrivial Cartesian product and \(P\in T\). Their number is
\begin{equation}
    N_B
    =
    D\left(2^{\sum_{i=1}^N d_i-N}-2\right).
    \label{eq:B-size}
\end{equation}
Thus, before adding the auxiliary variables introduced by the totalizer
encoding, the base number of variables is \(N_{\mathrm{var}} = |\Omega|+ N_B.\)

For \(\mathcal C=\mathbb Z_3\times\mathbb Z_3\times\mathbb Z_3\), we have
\(D=27\),
\[
    |\Omega|=7^3-3\cdot7+2=324,
    \qquad
    N_B=27(2^6-2)=1674.
\]
Hence the base SAT encoding contains \(1998\) variables before the cardinality encoding is added.

The SAT constraints above have the intended correctness property.
Given a satisfying assignment, let $  \mathcal T=\{\,t\in\Omega:A_t=1\,\}$ be the set of selected tiles. The \textbf{covering} clauses ensure that every point of
\(\mathcal C\) is covered by at least one tile in \(\mathcal T\), while the
\textbf{non-overlap} clauses ensure that no two selected tiles intersect. Hence
\(\mathcal T\) is a tile partition of \(\mathcal C\). The \textbf{exact-cardinality}
constraint gives \(|\mathcal T|=s\). Finally, the \textbf{non-combinability} clauses
exclude the possibility that the union of any proper subset of selected
tiles with at least two members forms a single Cartesian product. Therefore, any
satisfying assignment of the SAT formula gives an \(O_N\)-tile decomposition of
\(\mathcal C\) with exactly \(s\) tiles. Conversely, any \(O_N\)-tile
decomposition with exactly \(s\) tiles satisfies these constraints by setting
\(A_t=1\) precisely for the tiles in the decomposition and \(A_t=0\) for all
other candidate tiles.

\section{Verification of General UPBs}\label{sec:verification}

This section develops a verification method for general finite sets of product
states. The method is based on the orthogonality-graph representation of UPBs:
it takes as input an arbitrary finite product-state set and tests whether it is
a UPB via local orthogonality graphs and unsaturated subspaces. Although
Theorem~\ref{thm:tile-to-UPB} already proves that every \(O_N\)-tile
decomposition gives rise to a UPB, we further apply this general verifier to
the symbolic product bases generated from the SAT-found \(O_N\)-tile
decompositions, thereby producing exact verification certificates for the
constructed UPBs.

\subsection{Orthogonality graph representation of UPBs}
We first recall the graph-theoretic language used to verify general UPB instances. An undirected simple graph $G = (V, E)$ consists of a vertex set $V$ and an edge set $E$, where edges are unordered pairs of distinct vertices. For any vertex $v$, its neighborhood $N_G(v)$ contains all vertices adjacent to $v$. The complete graph $K_n$ is a graph with $n$ vertices where every pair of distinct vertices is connected. The union of two graphs $G_1=(V_1, E_1)$ and $G_2=(V_2, E_2)$ is defined as $G_1 \cup G_2 = (V_1 \cup V_2, E_1 \cup E_2)$.

We now focus on the orthogonality graph, a concept whose utility in characterizing UPBs was first established by Alon and Lovász~\cite{alon2001unextendible}. Specifically, for a set of nonzero states $\mathcal{V} = \{ \ket{\varphi^{(1)}}, \dots, \ket{\varphi^{(k)}} \}$, the orthogonality graph is defined as $G = (V, E)$ with vertex set $V = \{v_1, \dots, v_k\}$ and edge set 
$$E = \left\{\{v_i, v_j\} \mid i \neq j \text{ and } \braket{\varphi^{(i)} | \varphi^{(j)}} = 0\right\}.$$ Thus, the  orthogonality graph of a UPB is a complete graph.

Let $\mathcal{U} = \left\{ \ket{\varphi^{(i)}} \right\}_{i=1}^k$ be a set of $k$  product states in the $N$-partite Hilbert space $\mathcal{H} = \bigotimes_{m=1}^N \mathbb{C}^{d_m}$; i.e., each state consists of local components: $\ket{\varphi^{(i)}} = \ket{\varphi_1^{(i)}}\ket{\varphi_2^{(i)}}  \cdots  \ket{\varphi_N^{(i)}}$. Based on the local orthogonality relations within each subsystem, we can define $N$ distinct orthogonality graphs.

\begin{definition}[\cite{shi2023graph}]
Let $\mathcal{U}$ be a set of $N$-partite product states in $\mathcal{H} = \bigotimes_{m=1}^N \mathbb{C}^{d_m}$, indexed by $i \in \{1, \dots, k\}$. For each local subsystem $\mathbb{C}^{d_m}$ ($m \in \{1, \dots, N\}$), the local orthogonality graph $G_m = (V, E_m)$ is defined as follows: 
\begin{itemize}
    \item The vertex set $V = \{v_1, v_2, \dots, v_k\}$ corresponds to the set of product states $\mathcal{U} = \left\{ \ket{\varphi^{(i)}} \right\}_{i=1}^k$. 
    \item The edge set $E_m$ contains an edge $(v_i, v_j)$ if and only if the corresponding product states have orthogonal local components in the $m$-th subsystem: 
    $$ E_m = \left\{\{v_i, v_j\} \mid i \neq j,  \braket{\varphi_m^{(i)} | \varphi_m^{(j)}} = 0\right\}.$$
\end{itemize}

For the \(m\)-th subsystem, a set \(W\subseteq V\) is called
\emph{unsaturated} if the corresponding local states do not span the whole
local space, namely
\[
    \operatorname{span}\{\ket{\varphi_m^{(j)}}:v_j\in W\} 
    \neq
    \mathbb C^{d_m}.
\]
\end{definition}

This graph-theoretic framework is essential for the characterization of UPBs. Specifically, Shi \textit{et al.}~\cite{shi2023graph} utilized these orthogonality graphs to establish a necessary and sufficient condition for a set of product states to constitute a UPB. This fundamental criterion serves as the cornerstone of our UPB verification algorithm. Figure~\ref{fig:shifts-orthogonality}
illustrates the local orthogonality graphs for the ``Shifts'' UPB in
Example~\ref{shift_upb}.

\begin{figure}[H]
  \centering
    \tikzset{
    myvertex/.style={circle, fill=black, inner sep=2pt},
        myboundingbox/.style={draw=none, use as bounding box}
  }

    \begin{subfigure}[b]{0.22\textwidth}
    \centering
    \begin{tikzpicture}
      \path[myboundingbox] (-0.5,-0.5) rectangle (2,2);             \node[myvertex, label=left:$v_1$] (v1) at (0,1.5) {};
      \node[myvertex, label=right:$v_2$] (v2) at (1.5,1.5) {};
      \node[myvertex, label=left:$v_3$] (v3) at (0,0) {};
      \node[myvertex, label=right:$v_4$] (v4) at (1.5,0) {};
            \draw [red, thick] (v1) -- (v2);
      \draw [red, thick] (v3) -- (v4);
    \end{tikzpicture}
    \caption{$G_1$}
  \end{subfigure}
  \hfill     \begin{subfigure}[b]{0.22\textwidth}
    \centering
    \begin{tikzpicture}
      \path[myboundingbox] (-0.5,-0.5) rectangle (2,2);
      \node[myvertex, label=left:$v_1$] (v1) at (0,1.5) {};
      \node[myvertex, label=right:$v_2$] (v2) at (1.5,1.5) {};
      \node[myvertex, label=left:$v_3$] (v3) at (0,0) {};
      \node[myvertex, label=right:$v_4$] (v4) at (1.5,0) {};
            \draw [green!60!black, thick] (v1) -- (v3);       \draw [green!60!black, thick] (v2) -- (v4); 
    \end{tikzpicture}
    \caption{$G_2$}
  \end{subfigure}
  \hfill
    \begin{subfigure}[b]{0.22\textwidth}
    \centering
    \begin{tikzpicture}
      \path[myboundingbox] (-0.5,-0.5) rectangle (2,2);
      \node[myvertex, label=left:$v_1$] (v1) at (0,1.5) {};
      \node[myvertex, label=right:$v_2$] (v2) at (1.5,1.5) {};
      \node[myvertex, label=left:$v_3$] (v3) at (0,0) {};
      \node[myvertex, label=right:$v_4$] (v4) at (1.5,0) {};
            \draw [blue, thick] (v1) -- (v4);
      \draw [blue, thick] (v2) -- (v3); 
    \end{tikzpicture}
    \caption{$G_3$}
  \end{subfigure}
  \hfill
    \begin{subfigure}[b]{0.22\textwidth}
    \centering
    \begin{tikzpicture}
      \path[myboundingbox] (-0.5,-0.5) rectangle (2,2);
      \node[myvertex, label=left:$v_1$] (v1) at (0,1.5) {};
      \node[myvertex, label=right:$v_2$] (v2) at (1.5,1.5) {};
      \node[myvertex, label=left:$v_3$] (v3) at (0,0) {};
      \node[myvertex, label=right:$v_4$] (v4) at (1.5,0) {};
            \draw [blue, thick] (v1) -- (v4);
      \draw [blue, thick] (v2) -- (v3);
      \draw [green!60!black, thick] (v1) -- (v3);
      \draw [green!60!black, thick] (v2) -- (v4); 
      \draw [red, thick] (v1) -- (v2);
      \draw [red, thick] (v3) -- (v4);
    \end{tikzpicture}
    \caption{$G_1\cup G_2\cup G_3 = K_4$}   \end{subfigure}
  
  \caption{Orthogonality graph representation of the ``Shifts'' UPB (Example \ref{shift_upb}). The complete graph $K_4$ is decomposed into three edge-disjoint subgraphs $G_1, G_2, G_3$ corresponding to the orthogonality relations in each subsystem.}
  \label{fig:shifts-orthogonality}
\end{figure}

\begin{lemma}[\cite{shi2023graph}]\label{lemma:upb_char}
Let $\mathcal{U}$ be a set of $k$ product states in $\mathcal{H} = \bigotimes_{m=1}^N \mathbb{C}^{d_m}$, and let $\{G_m\}_{m=1}^N$ be the corresponding local orthogonality graphs. The set $\mathcal{U}$ is a UPB if and only if the following two conditions hold:
\begin{enumerate}
    \item \textbf{Orthogonality:} The union of all local graphs forms a complete graph:
    $$ \bigcup_{m=1}^N G_m = K_k. $$
    \item \textbf{Unextendibility:} The vertex set $V$ cannot be covered by a collection of unsaturated sets. That is,
    $$ \bigcup_{m=1}^N W_m \neq V, $$
    for every $N$-tuple $(W_1, \dots, W_N)$ where each \(W_m\) is unsaturated in the \(m\)-th subsystem.
\end{enumerate}
\end{lemma}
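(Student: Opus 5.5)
The plan is to prove the two directions separately, treating the orthogonality condition and the unextendibility condition independently. The key fact in both directions is that a product state $\ket{x}=\ket{x_1}\cdots\ket{x_N}$ is orthogonal to $\ket{\varphi^{(j)}}$ if and only if $\braket{x_m|\varphi_m^{(j)}}=0$ for at least one $m$, since
\[
\braket{x|\varphi^{(j)}}=\prod_{m=1}^N\braket{x_m|\varphi_m^{(j)}}.
\]
\textbf{Orthogonality.} For condition 1, the same factorization shows that two product states of $\mathcal U$ are orthogonal exactly when they are locally orthogonal in some subsystem. This is precisely the statement that $\{v_i,v_j\}\in\bigcup_m E_m$. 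Hence pairwise orthogonality of $\mathcal U$ is equivalent to $\bigcup_{m=1}^N G_m=K_k$.

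\textbf{Unextendibility, necessity.} Assume $\mathcal U$ is orthogonal, and suppose there are unsaturated sets $W_1,\dots,W_N$ with $\bigcup_m W_m=V$. For each $m$, the span of $\{\ket{\varphi_m^{(j)}}:v_j\in W_m\}$ is a proper subspace of $\mathbb C^{d_m}$, so we can pick a nonzero $\ket{x_m}$ in its orthogonal complement. If $W_m=\emptyset$, any nonzero vector will do. Set $\ket{x}=\ket{x_1}\cdots\ket{x_N}\neq 0$. Every $v_j$ lies in some $W_m$, so $\braket{x_m|\varphi_m^{(j)}}=0$ and therefore $\braket{x|\varphi^{(j)}}=0$. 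Thus $\ket{x}$ is a nonzero product state in the orthogonal complement, and $\mathcal U$ is not a UPB.

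\textbf{Unextendibility, sufficiency.} Conversely, suppose $\mathcal U$ is not a UPB, and let $\ket{x}=\ket{x_1}\cdots\ket{x_N}$ be a nonzero product state orthogonal to all of $\mathcal U$. Each $\ket{x_m}$ is then nonzero. Define
\[
W_m=\{v_j:\braket{x_m|\varphi_m^{(j)}}=0\}.
\]
All the local states indexed by $W_m$ lie in $\ket{x_m}^\perp$, a hyperplane of $\mathbb C^{d_m}$, so $W_m$ is unsaturated. By the factorization above, every $v_j$ belongs to some $W_m$, so $\bigcup_m W_m=V$, violating condition 2.

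Combining the three parts gives the equivalence: condition 1 is equivalent to $\mathcal U$ being an orthogonal product set, and, given orthogonality, condition 2 is equivalent to the absence of a nonzero product state orthogonal to $\mathcal U$. I do not expect a real obstacle. The only care points are two conventions: an empty $W_m$ counts as unsaturated (its span is $\{0\}$), and the chosen local vectors $\ket{x_m}$ must be nonzero so that $\ket{x}\neq0$.
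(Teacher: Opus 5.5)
Your proof is correct and complete. The factorization $\braket{x|\varphi^{(j)}}=\prod_m\braket{x_m|\varphi_m^{(j)}}$ settles condition~1. The vectors chosen in the orthogonal complements of the unsaturated spans give a nonzero product state orthogonal to $\mathcal U$ whenever $V$ is covered. Conversely, the sets $W_m=\{v_j:\braket{x_m|\varphi_m^{(j)}}=0\}$ lie in the hyperplane $\ket{x_m}^{\perp}$ and cover $V$. You also handle the only delicate points, namely empty $W_m$ and nonzero local factors.

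The paper itself gives no proof of this lemma; it imports it from Shi \textit{et al.}, so there is no in-paper argument to compare against. Your argument is the standard direct proof of this criterion.
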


\subsection{A subspace verification algorithm for general UPBs}\label{subsec:subspace-verifier}

The orthogonality-graph criterion above gives a verification algorithm for arbitrary finite sets of multipartite product states, not only for those arising from \(O_N\)-tile decompositions. In the present work, Theorem~\ref{thm:tile-to-UPB} already proves unextendibility at the structural level, and the following procedure provides an independent exact verification of the constructed product-state sets. We also benchmark its implementation against QETLAB~\cite{qetlab}. Based on Lemma~\ref{lemma:upb_char}, the verifier has two steps:
\begin{enumerate}
    \item \textbf{Orthogonality verification:} verify that the union of the local
    orthogonality graphs is the complete graph on the constructed product states.
    \item \textbf{Unextendibility verification:} verify that there is no \(N\)-tuple
\((W_1,\ldots,W_N)\), where each \(W_m\) is an unsaturated set for the \(m\)-th
local subsystem, such that \(W_1\cup\cdots\cup W_N = V\).
\end{enumerate}

The second step is accelerated by enumerating only maximal unsaturated sets.
Let \(V=\{1,\ldots,k\}\) be the index set of the constructed product states 

\[
    \ket{\psi^{(j)}}
    =
    \ket{\psi_1^{(j)}}\otimes\cdots\otimes\ket{\psi_N^{(j)}} \in\mathbb C^{d_1}\otimes\cdots\otimes\mathbb C^{d_N},
    \qquad j\in V .
\]
For the \(m\)-th subsystem, a set \(W\subseteq V\) is called  a maximal unsaturated set (MUS) if it is unsaturated and no proper
superset of \(W\) is unsaturated.

We enumerate MUSs by a subspace-closure method. For each subsystem \(m\), we
enumerate all subsets
\[
    S\subseteq V,
    \qquad
    1\le |S|\le d_m-1,
\]
compute
\[
    L_m^{S}
    =
    \operatorname{span}\{\ket{\psi_m^{(j)}}:j\in S\},
\]
and, whenever \(L_m^{S}\neq\mathbb C^{d_m}\), form the closure
\[
    \operatorname{cl}_m(S)
    =
    \{\,j\in V:\ket{\psi_m^{(j)}}\in L_m^{S}\,\}.
\]
The MUSs for the \(m\)-th subsystem are exactly the inclusion-maximal sets among
these closures.

The correctness of this enumeration is immediate from linear algebra. If
\(W\) is an MUS for subsystem \(m\), then
\[
    L=\operatorname{span}\{\ket{\psi_m^{(j)}}:j\in W\}
\]
is a proper subspace of \(\mathbb C^{d_m}\). Hence \(L\) has a basis consisting
of at most \(d_m-1\) states indexed by elements of \(W\). This basis is
enumerated as some subset \(S\subseteq W\), and the closure
\(\operatorname{cl}_m(S)\) recovers \(W\) by maximality. Conversely, every
inclusion-maximal closure is unsaturated by construction and is therefore an
MUS.

Let \(\mathcal L_m\) denote the family of MUSs for the \(m\)-th subsystem. By
Lemma~\ref{lemma:upb_char}, 
the product-state set is unextendible, and hence is certified as a UPB, if and
only if
\[
    W_1\cup\cdots\cup W_N\neq V
\]
for every \((W_1,\ldots,W_N)\in
\mathcal L_1\times\cdots\times\mathcal L_N\).

For fixed local dimensions, the MUS enumeration is polynomial in \(k\). For the
\(m\)-th subsystem, the number of generating subsets is
\[
    \sum_{r=1}^{d_m-1}\binom{k}{r}
    =
    O(k^{d_m-1}).
\]
Using naive linear-algebra routines, computing each span and testing membership
of all \(k\) local states costs \(O(kd_m^3)\). Thus the MUS generation step
for subsystem \(m\) costs \(O(k^{d_m}d_m^3)\), up to the final inclusion-maximal filtering, which can be implemented
efficiently with bitsets.

The pseudocode is given in Appendix~\ref{appendixA}. Table~\ref{Qetlab}
compares the verification times of this method with QETLAB~\cite{qetlab} on representative
instances.~\footnote {All benchmark experiments were performed on the same computer in order to ensure a fair comparison. The machine was equipped with an Intel Core i7-13650HX processor with 14 cores, 16 GB DDR5 memory at 4800 MHz. Our verification algorithm was implemented in Python using exact symbolic arithmetic; for the timing comparison, detailed edge lists and unsaturated-set outputs were disabled. The QETLAB baseline was evaluated by calling the corresponding \textit{IsUPB} function in QETLAB~\cite{qetlab}.}

\begin{table}[htbp]
\centering
\renewcommand{\arraystretch}{1.1}
\small
\resizebox{\textwidth}{!}{\begin{tabular}{|c|c|c|c|c|c|c|c|c|c|c|}
\hline
\multicolumn{2}{|c|}{\textbf{System}} & \multicolumn{3}{c|}{$\mathbb{C}^2 \otimes \mathbb{C}^2 \otimes \mathbb{C}^3$} & \multicolumn{4}{c|}{$\mathbb{C}^2 \otimes \mathbb{C}^2 \otimes \mathbb{C}^4$} & \multicolumn{2}{c|}{$\mathbb{C}^2 \otimes \mathbb{C}^2 \otimes \mathbb{C}^5$} \\ \hline
\multicolumn{2}{|c|}{\textbf{Size}} & 6 & 7 & 8 & 9 & 10 & 11 & 12 & 13 & 14 \\ \hline
\multirow{2}{*}{\textbf{Time (s)}} & QETLAB & 0.0622 & 0.1193 & 0.8317 & 1.4031 & 3.3714 & 11.7792 & 43.0482 & 228.4552 & $>300$ \\ \cline{2-11} 
& Our Algorithm & $0.0495$ & $0.0144$ & $0.0142$ & $0.0216$ & $0.0207$ & $0.0307$ & $0.0457$ & $0.1245$ & $0.2023$ \\ \hline
\end{tabular}}
\caption{Comparison of verification times (in seconds) between QETLAB and our algorithm for UPBs of varying sizes and systems}
\label{Qetlab}
\end{table}

Table~\ref{Qetlab}  indicates that the proposed verifier scales much better than
the QETLAB baseline on the tested instances. While the two methods are
comparable for the smallest example, the running time of QETLAB increases
rapidly as the UPB size grows. In contrast, our algorithm remains below one
second for all listed cases, showing an order-of-magnitude improvement in
verification time. These results demonstrate that the orthogonality-graph and
maximal-unsaturated-set approach provides an efficient exact verification tool
for the product bases considered here.

\section{Results and Discussion}
\subsection{Results}
We demonstrate the effectiveness of our method in constructing and verifying UPBs by solving the $O_3$-tile structure for the 3-cube $\mathcal{C} = \mathbb{Z}_3 \times \mathbb{Z}_3 \times \mathbb{Z}_3$ as a representative example.

\begin{theorem}
There exist UPBs of every size \(k\in\{7,13,14,\ldots,23\}\) in
\(\mathbb C^3\otimes\mathbb C^3\otimes\mathbb C^3\). Among these, sizes
\(7\) and \(19\) are previously known, while the other sizes are obtained by the SAT-assisted \(O_3\)-tile construction of this work.
\end{theorem}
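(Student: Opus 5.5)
By Theorem~\ref{thm:tile-to-UPB}, an \(O_3\)-tile decomposition of \(\mathcal C=\mathbb Z_3\times\mathbb Z_3\times\mathbb Z_3\) with \(s\) tiles yields a UPB of size \(27-s+1=28-s\). So sizes \(13,14,\ldots,23\) correspond exactly to decompositions with \(s=15,14,\ldots,5\) tiles. The proof therefore reduces to exhibiting, for each \(s\in\{5,6,\ldots,15\}\), one explicit \(O_3\)-tile decomposition of \(\mathcal C\) with exactly \(s\) tiles. For sizes \(7\) and \(19\) I would simply cite the known constructions~\cite{divincenzo2003unextendible,alon2001unextendible,agrawal2019genuinely}. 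Note that \(7=1+3\cdot 2\) is the Alon--Lov\'asz minimum, which is attained here since \(7\) is odd but all \(d_i=3\) are odd. Size \(19\) would in any case also be covered by the \(s=9\) decomposition.

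\textbf{Step 1: generate the decompositions.} For each \(s\), I would run the SAT encoding of Section~\ref{sec:sat-search}. This consists of the covering, non-overlap and non-combinability clauses, with the Tseitin conversion, over the \(324\) admissible tiles, together with the exact-cardinality constraint \(\sum_t A_t=s\). I would then extract a satisfying assignment and list the selected tiles \(t_1,\ldots,t_s\) explicitly in a table, one row per \(s\). By the correctness argument at the end of Section~\ref{sec:sat-search}, any satisfying assignment is an \(O_3\)-tile decomposition.

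\textbf{Step 2: certify the decompositions independently of the solver.} For the write-up, each listed decomposition should be checkable by hand or by a short script. Partition is immediate, since the tile sizes sum to \(27\) and the tiles are pairwise disjoint. The \(O_3\) condition requires that no union of \(J\) tiles with \(1<|J|<s\) is a Cartesian product. Rather than checking all \(2^s\) subsets, it suffices to check each of the at most \(7^3\) boxes \(T=Q_1\times Q_2\times Q_3\): either \(T\) is not a union of tiles, or it is a single tile, or it is all of \(\mathcal C\).

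\textbf{Step 3: apply the tile-to-UPB construction and verify.} Applying Theorem~\ref{thm:tile-to-UPB} gives the UPBs with \(|\mathcal U|=28-s\). As a redundant check, I would also run the verifier of Section~\ref{sec:verification} (Lemma~\ref{lemma:upb_char} with MUS enumeration) on each resulting symbolic set.

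\textbf{Main obstacle.} The difficulty is purely existential: the SAT instance may be unsatisfiable, or hard for the solver, at the extreme values. At large \(s\) such as \(15\), many small tiles must be admissible yet still fail to combine. At small \(s\) such as \(5\), there are few large tiles, so avoiding a mergeable pair is tight. If the solver struggles, I would add symmetry breaking, for example fixing the tile containing \((0,0,0)\) up to the action of \(S_3\wr S_3\) on coordinates. I would also try seeding from known decompositions, such as the \(s=9\) one behind size \(19\), and locally refining or merging tiles.
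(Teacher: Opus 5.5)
Your proposal is correct and follows the paper's proof. Size $7$ comes from the attainability of the Alon--Lov\'asz bound (all $d_i=3$ are odd), and sizes $13,\ldots,23$ come from Theorem~\ref{thm:tile-to-UPB} applied to SAT-found $O_3$-tile decompositions with $s=5,\ldots,15$ tiles, giving $|\mathcal U|=28-s$; the explicit decompositions the argument rests on are listed in Appendix~\ref{appendixB}, so the existential risk in your last paragraph does not arise, and your solver-independent box check of the $O_3$ condition is a valid way to certify those lists.
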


\begin{proof}
The \(k=7\) case follows from the attainability of the
Alon--Lov\'asz lower bound~\cite{alon2001unextendible}. Indeed, \(1+\sum_{i=1}^3(3-1)=7,\) and the bound is attainable because all three local dimensions are odd.

For the remaining cardinalities, the SAT encoding produces
\(O_3\)-tile decompositions of
\(\mathcal C=\mathbb Z_3\times\mathbb Z_3\times\mathbb Z_3\)
with \(s=5,6,\ldots,15\) tiles, as listed in
Appendix~\ref{appendixB}. By Theorem~\ref{thm:tile-to-UPB}, each such
decomposition induces a UPB of cardinality \(27-s+1=28-s\). As \(s\) ranges from \(5\) to \(15\), the resulting cardinalities are
\(23,22,\ldots,13\). Exact verification data for the corresponding
symbolic product-state sets are provided in the supplementary material;
see the Data Availability Statement.
\end{proof}

Table~\ref{result} summarizes UPB sizes obtained from \(O_N\)-tile decompositions found by the SAT search, together with sizes derived by Lemma~\ref{feng} and previously known sizes from the literature. The table notes mark sizes that had already appeared in earlier work. Sizes without notes are, to the best of our knowledge, new instances generated by the present automated \(O_N\)-tile search.~\footnote{For every unmarked entry, the corresponding tile decomposition, symbolic UPB, and verification certificate are provided in the Zenodo archive.}

\begin{lemma}[\cite{feng2006unextendible}]
\label{feng}
If there exists a UPB of size $x$ in $\mathbb{C}^{d_1} \otimes \cdots \otimes \mathbb{C}^{d_N} \otimes \mathbb{C}^{a}$, and a UPB of size $y$ in $\mathbb{C}^{d_1} \otimes \cdots \otimes \mathbb{C}^{d_N} \otimes \mathbb{C}^{b}$, then there exists a UPB of size $x+y$ in $\mathbb{C}^{d_1} \otimes \cdots \otimes \mathbb{C}^{d_N} \otimes \mathbb{C}^{a+b}$.
\end{lemma}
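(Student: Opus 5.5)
We need to prove Lemma (Feng): gluing UPBs along the last party. The plan is to embed the two local spaces into complementary coordinate blocks of $\mathbb C^{a+b}$ and take the union of the two UPBs.

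\textbf{Construction and orthogonality.} Write $\mathbb C^{a+b}=\mathbb C^{a}\oplus\mathbb C^{b}$, with $\mathbb C^a$ spanned by $\ket{0},\dots,\ket{a-1}$ and $\mathbb C^b$ by $\ket{a},\dots,\ket{a+b-1}$. Let $\mathcal U_1$ be the given UPB of size $x$ in $\mathcal K\otimes\mathbb C^a$ and $\mathcal U_2$ the UPB of size $y$ in $\mathcal K\otimes\mathbb C^b$, where $\mathcal K=\bigotimes_{i=1}^N\mathbb C^{d_i}$. Embed $\mathcal U_1$ by viewing its last local factors inside the first block, and $\mathcal U_2$ by shifting its last local factors into the second block. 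Set $\mathcal U=\mathcal U_1\cup\mathcal U_2$. Orthogonality within each $\mathcal U_\ell$ is inherited. Orthogonality across the two families holds because the last local factors lie in orthogonal blocks. Hence $\mathcal U$ is an orthogonal product set of size $x+y$.

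\textbf{Unextendibility.} Suppose $\ket{v}\ket{w}$ is a nonzero product state orthogonal to all of $\mathcal U$, with $\ket{v}\in\mathcal K$ a nonzero product vector and $\ket w\in\mathbb C^{a+b}$ nonzero. Decompose $\ket w=\ket{w_1}+\ket{w_2}$ according to the two blocks. Since every state of $\mathcal U_1$ has last factor in the first block, we get $\braket{\varphi|v\,w}=\braket{\varphi|v\,w_1}$ for $\varphi\in\mathcal U_1$, and similarly for $\mathcal U_2$ with $w_2$. At least one of $w_1,w_2$ is nonzero; say $w_1\neq0$. Then $\ket{v}\ket{w_1}$ is a nonzero product state in $\mathcal K\otimes\mathbb C^a$ orthogonal to $\mathcal U_1$. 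This contradicts the unextendibility of $\mathcal U_1$, and the case $w_2\neq0$ contradicts $\mathcal U_2$ in the same way.

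\textbf{Main obstacle.} The only delicate point is the reduction to a single block, i.e.\ checking that the inner product with members of $\mathcal U_\ell$ only sees the component $w_\ell$. This follows directly from the block support of the last factors. Nontriviality is also clear: $x<\dim(\mathcal K)\,a$ and $y<\dim(\mathcal K)\,b$ give $x+y<\dim(\mathcal K)(a+b)$, so the resulting UPB is nontrivial whenever both inputs are.
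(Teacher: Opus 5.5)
Your argument is correct and is the standard block direct-sum proof of this gluing lemma; the paper states the lemma without proof, citing Feng, and splitting $\mathbb C^{a+b}=\mathbb C^{a}\oplus\mathbb C^{b}$ and projecting the last factor of a putative extending product state onto the two blocks is the standard argument for it. One minor refinement: the glued UPB is nontrivial as soon as one of the two input UPBs is nontrivial, since $x\le a\dim\mathcal K$ and $y\le b\dim\mathcal K$ always hold.
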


\begin{table}[htbp]
\centering
\begin{threeparttable}
    \caption{Summary of UPB sizes in small tripartite and quadripartite systems}
    \label{result}
    \renewcommand{\arraystretch}{1.2}

    \begin{tabular}{|cc|cc|cc|}
    \hline
    \multicolumn{6}{|c|}{Tripartite systems} \\
    \hline
    system & size & system & size & system & size \\
    \hline

    $\mathbb{C}^2 \otimes \mathbb{C}^2 \otimes \mathbb{C}^2$
    & \textit{4}\tnote{a}
    & $\mathbb{C}^2 \otimes \mathbb{C}^2 \otimes \mathbb{C}^3$
    & $6 \sim 8$\tnote{b}
    & $\mathbb{C}^2 \otimes \mathbb{C}^2 \otimes \mathbb{C}^4$
    & \textit{6}, $8 \sim 12$\tnote{c} \\

    $\mathbb{C}^2 \otimes \mathbb{C}^2 \otimes \mathbb{C}^5$
    & $8 \sim 16$\tnote{d}
    & $\mathbb{C}^2 \otimes \mathbb{C}^2 \otimes \mathbb{C}^6$
    & \textit{8}, $10 \sim 20$\tnote{e}
    & $\mathbb{C}^2 \otimes \mathbb{C}^3 \otimes \mathbb{C}^3$
    & \textit{6}, $9 \sim 14$ \\

    $\mathbb{C}^2 \otimes \mathbb{C}^3 \otimes \mathbb{C}^4$
    & $12 \sim 20$
    & $\mathbb{C}^2 \otimes \mathbb{C}^3 \otimes \mathbb{C}^5$
    & \textit{8}, $15 \sim 26$
    & $\mathbb{C}^2 \otimes \mathbb{C}^3 \otimes \mathbb{C}^6$
    & $18 \sim 32$ \\

    $\mathbb{C}^2 \otimes \mathbb{C}^4 \otimes \mathbb{C}^4$
    & \textit{8}, $16 \sim 28$
    & $\mathbb{C}^2 \otimes \mathbb{C}^4 \otimes \mathbb{C}^5$
    & $18 \sim 36$
    & $\mathbb{C}^2 \otimes \mathbb{C}^4 \otimes \mathbb{C}^6$
    & \textit{10}, $24 \sim 44$ \\

    $\mathbb{C}^3 \otimes \mathbb{C}^3 \otimes \mathbb{C}^3$
    & \textit{7}, $13 \sim 23$\tnote{f}
    & $\mathbb{C}^3 \otimes \mathbb{C}^3 \otimes \mathbb{C}^4$
    & \textit{8}, $17 \sim 32$\tnote{g}
    & $\mathbb{C}^3 \otimes \mathbb{C}^3 \otimes \mathbb{C}^5$
    & \textit{9}, $22 \sim 41$ \\

       \hline
    \multicolumn{6}{|c|}{Quadripartite systems} \\
    \hline
    system & size & system & size & system & size \\
    \hline
    $\mathbb{C}^2 \otimes \mathbb{C}^2 \otimes \mathbb{C}^2 \otimes \mathbb{C}^3$
    & \textit{6}, $11 \sim 20$
    & $\mathbb{C}^2 \otimes \mathbb{C}^2 \otimes \mathbb{C}^3 \otimes \mathbb{C}^3$
    & $16 \sim 32$
    & $\mathbb{C}^2 \otimes \mathbb{C}^3 \otimes \mathbb{C}^3 \otimes \mathbb{C}^3$
    & \textit{8}, $25 \sim 50$ \\
    \hline
    \end{tabular}

    \begin{tablenotes}
        \footnotesize
        \item[a] Ref.~\cite{bennett1999unextendible} established the existence of a UPB of size $4$ in this system.
        \item[b--e] Ref.~\cite{zhang2021new} established the existence of all UPB sizes listed for the corresponding systems.
                                        \item[f] Ref.~\cite{agrawal2019genuinely} established the existence of UPBs of size $19$ in this system.
        \item[g] Ref.~\cite{che2022constructing} established the existence of a UPB of size $19$ in this system.
        \item Italicized entries indicate systems for which the Alon--Lovász bound~\cite{alon2001unextendible} is attainable.

    \end{tablenotes}
\end{threeparttable}
\end{table}

\subsection{Discussion}
\subsubsection{Conclusion}
In this study, we introduced a SAT-assisted framework for constructing UPBs from hypercube decompositions. The main theoretical ingredient is the tile-to-UPB theorem, which shows that every \(O_N\)-tile decomposition induces a UPB through the construction based on tile-wise Fourier product bases and a global stopper state. The SAT formulation turns the search for such decompositions into a SAT problem, and the orthogonality-graph subspace verifier provides exact certificates for the resulting symbolic product bases.

The UPBs obtained here can also serve as seed examples for recursive constructions. As an example, combining Lemma~\ref{feng} from~\cite{feng2006unextendible} with the data in Table~\ref{result}, let
\[ \mathcal A=\{9,10,11,12,13,14\},\qquad \mathcal B=\{13,14,\ldots,23\}. \] 
Then, for any \(d=2x+3y\), where \(x,y\) are nonnegative integers and not both zero, and for any \(a\in\mathcal A\) and \(b\in\mathcal B\), there exists a UPB of size \(ax+by\) in
\(\mathbb C^3\otimes\mathbb C^3\otimes\mathbb C^d\). Thus, the low-dimensional instances produced by the present automated search yield infinite families of multipartite UPBs in larger local dimensions.

Every UPB constructed here yields a completely entangled subspace, namely the
orthogonal complement of its span, and the normalized projector onto this
complement gives a standard UPB-induced PPT entangled state. Thus, the explicit
instances generated by the present method provide additional structured
examples of completely entangled subspaces and bound entangled states.

\subsubsection{Future Work}
Future work may focus on the following directions:
\begin{enumerate}
\item Optimizing the SAT modeling of \(O_N\)-tile decompositions, for example by adding symmetry-breaking constraints, improving the encoding of non-combinability, and developing parallel implementations for larger multipartite systems.

\item Investigating the properties of UPBs induced by \(O_N\)-tile decompositions, with particular attention to their possible cardinalities. Since Theorem~\ref{thm:tile-to-UPB} converts the number of tiles into the size of the resulting UPB, a natural problem is to determine which UPB sizes are attainable by \(O_N\)-tile decompositions in a given multipartite system, and to identify possible obstructions for unattainable sizes.

\item Improving the verification certificates, including more compact certificates for maximal unsaturated sets and exact arithmetic routines for larger local dimensions.

\end{enumerate}

\section*{Data availability statement}

The data supporting the findings of this study consist of the explicit
\(O_N\)-tile decompositions found by the SAT search, the corresponding symbolic product bases obtained from a construction based on tile-wise Fourier product bases and a global stopper state, and the verification certificates generated by the exact orthogonality-graph subspace verifier. These data, together with the source code used for the SAT-based search and the verification procedure, are available at~\url{https://doi.org/10.5281/zenodo.20782480}.

All other data generated or analysed during this study are included in this article and its appendices.

\begin{acknowledgments}
The research of Wanchen Zhang is supported by the Innovation Program for Quantum Science and Technology under Grant No.~2025ZD0300102. Fei Shi acknowledges support from the 2026 Basic Start-up Fund of Sun Yat-sen University under Grant No.~67000-12266020. The research of Xiande Zhang is supported by the National Key Research and Development Program of China under Grant No.~2023YFA1010200, the National Natural Science Foundation of China under Grant Nos.~12171452 and 12231014, and the Quantum Science and Technology--National Science and Technology Major Project under Grant No.~2021ZD0302902.

\end{acknowledgments}

\newpage

\appendix \section{Pseudocode for the verification algorithm}
\label{appendixA}
\begin{algorithm}
\caption{Automated verification of UPBs via local orthogonality graphs}
\label{alg:upb_verification}
\begin{algorithmic}[1]

\Require A set of \(k\) nonzero product vectors
\[
    \mathcal V=\{\ket{\psi^{(j)}}\}_{j=1}^k
    \subseteq \bigotimes_{m=1}^N\mathbb C^{d_m},
    \qquad
    \ket{\psi^{(j)}}
    =
    \bigotimes_{m=1}^N\ket{\psi_m^{(j)}} .
\]

\Ensure \texttt{True} if \(\mathcal V\) is a UPB, and
\texttt{False} otherwise.

\State \(V\gets [k]=\{1,\ldots,k\}\).
\State Construct the local orthogonality graphs
\(G_1,\ldots,G_N\).

\If{\(\displaystyle \bigcup_{m=1}^N G_m\neq K_k\)}
    \State \Return \texttt{False}
\EndIf

\For{each subsystem \(m=1,\ldots,N\)}
    \State \(\mathcal F_m\gets\emptyset\)

    \For{each \(S\subseteq V\) with \(1\le |S|\le d_m-1\)}
        \State Compute
        \[
            L_m(S)
            :=
            \operatorname{span}
            \{\ket{\psi_m^{(j)}}:j\in S\}.
        \]

        \If{\(\dim L_m(S)<d_m\)}
            \State Define
            \[
                \operatorname{cl}_m(S)
                :=
                \left\{
                j\in V:
                \ket{\psi_m^{(j)}}\in L_m(S)
                \right\}.
            \]

            \State
            \[
                \mathcal F_m
                \gets
                \mathcal F_m\cup
                \{\operatorname{cl}_m(S)\}.
            \]
        \EndIf
    \EndFor

    \State Let \(\mathcal M_m\) be the family of inclusion-maximal
    members of \(\mathcal F_m\).
\EndFor

\For{each
\((W_1,\ldots,W_N)\in
\mathcal M_1\times\cdots\times\mathcal M_N\)}
    \If{\(\displaystyle \bigcup_{m=1}^N W_m=V\)}
        \State \Return \texttt{False}
    \EndIf
\EndFor

\State \Return \texttt{True}

\end{algorithmic}
\end{algorithm}

\section{Detailed $O_3$-tile decompositions}
\label{appendixB}

The following lists the tile decompositions of $\mathcal{C} = \mathbb{Z}_3 \times \mathbb{Z}_3 \times \mathbb{Z}_3$ ($|\mathcal{C}|=27$). Each decomposition of size $s$ yields a UPB of size $k = 27 - s + 1$. Tiles are denoted by $t_j = R_1^{(j)} \times R_2^{(j)} \times R_3^{(j)}$.

\noindent\textbf{B.1. $s=5$ Tile Decomposition (Corresponding to UPB Size $23$)}

\noindent $t_1 = \{0, 1, 2\} \times \{0, 2\} \times \{2\}$, \quad $t_2 = \{0, 1, 2\} \times \{1, 2\} \times \{0\}$, \quad $t_3 = \{0, 1, 2\} \times \{0\} \times \{0, 1\}$, 
\newline $t_4 = \{0, 1, 2\} \times \{1\} \times \{1, 2\}$, \quad $t_5 = \{0, 1, 2\} \times \{2\} \times \{1\}$.

\vspace{0.5em}

\noindent\textbf{B.2. $s=6$ Tile Decomposition (Corresponding to UPB Size $22$)}

\noindent $t_1 = \{0, 1, 2\} \times \{0\} \times \{2\}$, \quad $t_2 = \{1, 2\} \times \{0, 2\} \times \{0, 1\}$, \quad $t_3 = \{1, 2\} \times \{1\} \times \{0, 1, 2\}$, 
\newline $t_4 = \{1, 2\} \times \{2\} \times \{2\}$, \quad $t_5 = \{0\} \times \{1, 2\} \times \{0, 1, 2\}$, \quad $t_6 = \{0\} \times \{0\} \times \{0, 1\}$.

\vspace{0.5em}

\noindent\textbf{B.3. $s=7$ Tile Decomposition (Corresponding to UPB Size $21$)}

\noindent $t_1 = \{0, 1, 2\} \times \{0\} \times \{1\}$, \quad $t_2 = \{0, 2\} \times \{0, 1, 2\} \times \{2\}$, \quad $t_3 = \{0, 2\} \times \{0, 1\} \times \{0\}$, 
\newline $t_4 = \{0, 2\} \times \{2\} \times \{0, 1\}$, \quad $t_5 = \{0, 2\} \times \{1\} \times \{1\}$, \quad $t_6 = \{1\} \times \{1, 2\} \times \{0, 1, 2\}$, 
\newline $t_7 = \{1\} \times \{0\} \times \{0, 2\}$.

\vspace{0.5em}

\noindent\textbf{B.4. $s=8$ Tile Decomposition (Corresponding to UPB Size $20$)}

\noindent $t_1 = \{0, 1, 2\} \times \{0, 1\} \times \{2\}$, \quad $t_2 = \{0, 1, 2\} \times \{0\} \times \{0, 1\}$, \quad $t_3 = \{0, 1, 2\} \times \{2\} \times \{1, 2\}$, 
\newline $t_4 = \{0, 1\} \times \{2\} \times \{0\}$, \quad $t_5 = \{0, 2\} \times \{1\} \times \{1\}$, \quad $t_6 = \{2\} \times \{1, 2\} \times \{0\}$, 
\newline $t_7 = \{1\} \times \{1\} \times \{0, 1\}$, \quad $t_8 = \{0\} \times \{1\} \times \{0\}$.

\vspace{0.5em}

\noindent\textbf{B.5. $s=9$ Tile Decomposition (Corresponding to UPB Size $19$)}

\noindent $t_1 = \{0, 1, 2\} \times \{0\} \times \{2\}$, \quad $t_2 = \{0, 1, 2\} \times \{1\} \times \{0\}$, \quad $t_3 = \{0, 2\} \times \{1, 2\} \times \{1\}$, 
\newline $t_4 = \{0, 2\} \times \{0\} \times \{0, 1\}$, \quad $t_5 = \{0, 2\} \times \{2\} \times \{0, 2\}$, \quad $t_6 = \{0, 2\} \times \{1\} \times \{2\}$, 
\newline $t_7 = \{1\} \times \{0, 1, 2\} \times \{1\}$, \quad $t_8 = \{1\} \times \{0, 2\} \times \{0\}$, \quad $t_9 = \{1\} \times \{1, 2\} \times \{2\}$.

\vspace{0.5em}

\noindent\textbf{B.6. $s=10$ Tile Decomposition (Corresponding to UPB Size $18$)}

\noindent $t_1 = \{0, 1, 2\} \times \{1\} \times \{0\}$, \quad $t_2 = \{0, 2\} \times \{0\} \times \{0\}$, \quad $t_3 = \{2\} \times \{0, 1, 2\} \times \{1, 2\}$, 
\newline $t_4 = \{0\} \times \{0, 1\} \times \{1, 2\}$, \quad $t_5 = \{1\} \times \{0, 1\} \times \{2\}$, \quad $t_6 = \{1\} \times \{1, 2\} \times \{1\}$, 
\newline $t_7 = \{0\} \times \{2\} \times \{0, 1, 2\}$, \quad $t_8 = \{1\} \times \{0\} \times \{0, 1\}$, \quad $t_9 = \{1\} \times \{2\} \times \{0, 2\}$, 
\newline $t_{10} = \{2\} \times \{2\} \times \{0\}$.

\vspace{0.5em}

\noindent\textbf{B.7. $s=11$ Tile Decomposition (Corresponding to UPB Size $17$)}

\noindent $t_1 = \{0, 1, 2\} \times \{2\} \times \{1, 2\}$, \quad $t_2 = \{0, 2\} \times \{1\} \times \{2\}$, \quad $t_3 = \{1, 2\} \times \{0\} \times \{2\}$, 
\newline $t_4 = \{1\} \times \{0, 1, 2\} \times \{0\}$, \quad $t_5 = \{2\} \times \{0, 1\} \times \{0, 1\}$, \quad $t_6 = \{0\} \times \{0, 2\} \times \{0\}$, 
\newline $t_7 = \{0\} \times \{0\} \times \{1, 2\}$, \quad $t_8 = \{0\} \times \{1\} \times \{0, 1\}$, \quad $t_9 = \{1\} \times \{1\} \times \{1, 2\}$, 
\newline $t_{10} = \{1\} \times \{0\} \times \{1\}$, \quad $t_{11} = \{2\} \times \{2\} \times \{0\}$.

\vspace{0.5em}

\noindent\textbf{B.8. $s=12$ Tile Decomposition (Corresponding to UPB Size $16$)}

\noindent $t_1 = \{1, 2\} \times \{0, 2\} \times \{2\}$, \quad $t_2 = \{0, 1\} \times \{0\} \times \{1\}$, \quad $t_3 = \{0, 2\} \times \{2\} \times \{0\}$, 
\newline $t_4 = \{1\} \times \{1, 2\} \times \{0, 1\}$, \quad $t_5 = \{2\} \times \{1, 2\} \times \{1\}$, \quad $t_6 = \{0\} \times \{1\} \times \{0, 1, 2\}$, 
\newline $t_7 = \{0\} \times \{0\} \times \{0, 2\}$, \quad $t_8 = \{0\} \times \{2\} \times \{1, 2\}$, \quad $t_9 = \{2\} \times \{0\} \times \{0, 1\}$, 
\newline $t_{10} = \{2\} \times \{1\} \times \{0, 2\}$, \quad $t_{11} = \{1\} \times \{0\} \times \{0\}$, \quad $t_{12} = \{1\} \times \{1\} \times \{2\}$.

\vspace{0.5em}

\noindent\textbf{B.9. $s=13$ Tile Decomposition (Corresponding to UPB Size $15$)}

\noindent $t_1 = \{0, 2\} \times \{1, 2\} \times \{0\}$, \quad $t_2 = \{1, 2\} \times \{0, 1\} \times \{2\}$, \quad $t_3 = \{0, 1\} \times \{1\} \times \{1\}$, 
\newline $t_4 = \{0, 1\} \times \{2\} \times \{2\}$, \quad $t_5 = \{0\} \times \{0, 2\} \times \{1\}$, \quad $t_6 = \{1\} \times \{0, 1\} \times \{0\}$, 
\newline $t_7 = \{2\} \times \{0, 1\} \times \{1\}$, \quad $t_8 = \{0\} \times \{0\} \times \{0, 2\}$, \quad $t_9 = \{1\} \times \{2\} \times \{0, 1\}$, 
\newline $t_{10} = \{2\} \times \{2\} \times \{1, 2\}$, \quad $t_{11} = \{0\} \times \{1\} \times \{2\}$, \quad $t_{12} = \{1\} \times \{0\} \times \{1\}$, 
\newline $t_{13} = \{2\} \times \{0\} \times \{0\}$.

\vspace{0.5em}

\noindent\textbf{B.10. $s=14$ Tile Decomposition (Corresponding to UPB Size $14$)}

\noindent $t_1 = \{0, 1\} \times \{0\} \times \{1\}$, \quad $t_2 = \{0, 1\} \times \{1\} \times \{0\}$, \quad $t_3 = \{0, 1\} \times \{2\} \times \{2\}$, 
\newline $t_4 = \{0, 2\} \times \{1\} \times \{2\}$, \quad $t_5 = \{0, 2\} \times \{2\} \times \{1\}$, \quad $t_6 = \{2\} \times \{0, 1\} \times \{0, 1\}$, 
\newline $t_7 = \{1\} \times \{0, 1\} \times \{2\}$, \quad $t_8 = \{1\} \times \{0, 2\} \times \{0\}$, \quad $t_9 = \{1\} \times \{1, 2\} \times \{1\}$, 
\newline $t_{10} = \{0\} \times \{0\} \times \{0, 2\}$, \quad $t_{11} = \{2\} \times \{2\} \times \{0, 2\}$, \quad $t_{12} = \{0\} \times \{1\} \times \{1\}$, 
\newline $t_{13} = \{0\} \times \{2\} \times \{0\}$, \quad $t_{14} = \{2\} \times \{0\} \times \{2\}$.

\vspace{0.5em}

\noindent\textbf{B.11. $s=15$ Tile Decomposition (Corresponding to UPB Size $13$)}

\noindent $t_1 = \{0, 1\} \times \{2\} \times \{1\}$, \quad $t_2 = \{0, 2\} \times \{1\} \times \{0\}$, \quad $t_3 = \{1, 2\} \times \{1\} \times \{2\}$, 
\newline $t_4 = \{0\} \times \{1, 2\} \times \{2\}$, \quad $t_5 = \{1\} \times \{0, 2\} \times \{0\}$, \quad $t_6 = \{2\} \times \{0, 1\} \times \{1\}$, 
\newline $t_7 = \{2\} \times \{0, 2\} \times \{2\}$, \quad $t_8 = \{0\} \times \{0\} \times \{0, 1, 2\}$, \quad $t_9 = \{1\} \times \{0\} \times \{1, 2\}$, 
\newline $t_{10} = \{1\} \times \{1\} \times \{0, 1\}$, \quad $t_{11} = \{2\} \times \{2\} \times \{0, 1\}$, \quad $t_{12} = \{0\} \times \{1\} \times \{1\}$, 
\newline $t_{13} = \{0\} \times \{2\} \times \{0\}$, \quad $t_{14} = \{1\} \times \{2\} \times \{2\}$, \quad $t_{15} = \{2\} \times \{0\} \times \{0\}$.

\bibliographystyle{unsrt}
\bibliography{references}

\end{document}